\theoremstyle{plain}
\newtheorem{assumption}{Assumption}
\newtheorem{lemma}{Lemma}
\theoremstyle{plain}
\newtheorem{theorem}{Theorem}
\theoremstyle{definition}
\newtheorem{definition}{Definition}
\theoremstyle{definition}
\theoremstyle{definition}
\newtheorem{problem}{Problem}
\title{\LARGE \bf
Social Resource Allocation in a Mobility System with Connected and Automated Vehicles: A Mechanism Design Problem
}
\author{Ioannis Vasileios Chremos, \textit{Student Member, IEEE}, and Andreas A. Malikopoulos, \textit{Senior Member, IEEE}%
\thanks{This research was supported in part by ARPAE's NEXTCAR program under the award number DE-AR0000796 and by the Delaware Energy Institute (DEI).}%
\thanks{The authors are with the Department of Mechanical Engineering, University of Delaware, Newark, DE 19716 USA (emails: \tt\small{ichremos@udel.edu}; \tt\small{andreas@udel.edu}.)}%
}
\begin{document}

\maketitle
\thispagestyle{empty}
\pagestyle{empty}

\begin{abstract}

In this paper, we investigate the social resource allocation in an emerging mobility system consisting of connected and automated vehicles (CAVs) using mechanism design. CAVs provide the most intriguing opportunity for enabling travelers to monitor mobility system conditions efficiently and make better decisions. However, this new reality will influence travelers' tendency-of-travel and might give rise to rebound effects, e.g., increased-vehicle-miles traveled. To tackle this phenomenon, we propose a mechanism design formulation that provides an efficient social resource allocation of travel time for all travelers. Our focus is on the socio-technical aspect of the problem, i.e., by designing appropriate socio-economic incentives, we seek to prevent potential rebound effects. In particular, we propose an economically inspired mechanism to influence the impact of the travelers' decision-making on the well-being of an emerging mobility system.

\end{abstract}

\section{Introduction}


Nowadays, it is nearly impossible to commute in a major urban area without the frustration of a traffic jam or congestion. Congestion leads to more accidents and altercations, and, most importantly, congestion is one of the key contributors that damages the environment (e.g., air pollution caused by the huge numbers of idling engines). It is highly expected that emerging mobility systems, e.g., connected and automated vehicles (CAVs), will be able to eliminate congestion and increase mobility efficiency in terms of energy and travel time \cite{Malikopoulos2017}. However, urban social life has been greatly associated with the technological impact of the car, which compels us to reassess the relationship between automobility and social life \cite{sheller,bissell}. Thus, it is vital to study the impact of CAVs in a socio-technical context focusing on the social dynamics.
%
%
The most novel and defining of all the formidable characteristics of the emerging mobility system is its socio-economic complexity. Future mobility systems will enable human-vehicle interaction and allow enhanced and universal accessibility. Evident from similar technological revolutions (e.g., the impact of elevators on building design and social class hierarchies \cite{bernard}), human social perspective and view can have a tremendous effect on how technological innovations are utilized and implemented. Similarly, CAVs are expected to become a socially disruptive innovation with vast technological, commercial, and regulatory implications. For example, in the form of rebound effects, the benefits of convenience and safety could potentially lead people to travel more frequently using their car, and thus, increase the traffic volume in the transportation network. Even though CAVs are not commercially available yet, the motivation behind our research is the following: ``systems with intertwined social and technological dimensions are not guaranteed to exhibit an optimal performance."

In previous work, we modeled the human social interaction with CAVs as a social dilemma in a game-theoretic approach \cite{chremos2020}. We investigated the \emph{social-mobility dilemma}, i.e., the binary decision-making of travelers between commuting with a CAV or using public transportation. In this paper, using mechanism design, we model the routing of travelers in a transportation network with CAVs as a social resource allocation problem.



Mechanism design theory has emerged to mathematically model, analyze, and solve informationally decentralized problems involving systems of multiple rational and intelligent agents \cite{myerson}. Mechanism design is concerned with methodologies that implement system-wide optimal solutions to a myriad of problems - problems in which the strategically interacting agents can hide their true preferences for better individual benefits, thus hurting the overall efficiency of the system. It has been widely used in areas like communication networks \cite{kakhbod}, power markets \cite{silva}, and social networks \cite{dave2020}. A mechanism may be defined as a mathematical structure that models institutions through which economic activity is guided and coordinated \cite{hurwicz}. We are using the notion of a mechanism in this sense, though the economic activity we aim to control is the allocation of travel time among travelers in a mobility system. Our proposed mechanism presupposes a central authority (e.g., central computer) that gathers all routing requests and travel time demands from CAVs around a city with a road infrastructure that supports connected and automated traffic. In this context, we build and design appropriate protocols and interfaces (e.g., tolls, subsidies) for a central traffic management computer, which will guarantee the realization of the desired outcome, i.e., maximizing social welfare and eliminating congestion.


The authors in \cite{hurwicz} formulated a resource allocation problem within the framework of mechanism design. This work led to a spark of research as mechanism design has been used extensively in communication networks in the form of decentralized resource allocation problems \cite{kakhbod,johari}, and also in transportation \cite{vasirani,wenqian}. In such problems, the main methodology is to apply the Vickrey-Clarke-Groves (VCG) mechanisms, which are direct mechanisms that achieve a socially-optimal solution as a dominant strategy. Because the VCG mechanisms have certain limitations (e.g., they are not budget balanced), there have been attempts to use different approaches to solve the mechanism design problem. For example, by adopting the Nash equilibrium (NE) as the solution concept of the mechanism, a surrogate optimization method can be used where the network manager asks the agents to report a bundle of messages that approximate their private information \cite{johari,tansu}.

In this paper, we focus on the social perspective of the emerging mobility systems with CAVs. It is widely accepted that CAVs will revolutionize urban mobility and the way people commute. An example would be for CAVs to make empty trips, i.e., no travelers, to avoid parking, and thus add extra congestion in the network \cite{de_almeida}. In addition, CAVs could potentially affect drivers' behavior and have an impact on traffic performance in general \cite{aria}. The question of the actual impact of CAVs on travel, energy, and carbon demand has attracted considerable attention \cite{wadud}. Depending on different environmental indicators, the authors in \cite{vivanco} provided a practical microeconomic environmental rebound effect model. So far, there has been research on the effects of a considerate penetration of shared CAVs in a major metropolitan area \cite{martinez}. However, most studies on CAVs have focused on how to coordinate CAVs in different traffic scenarios \cite{jackeline,zhao}. 


In this paper, we investigate the travel time provisioning in transportation networks with CAVs and strategic travelers. The main contribution of this paper is the development of an informationally decentralized travel time social allocation mechanism with strategic travelers possessing the following properties: (a) existence of at least one Nash equilibrium (NE), (b) budget balanced at equilibrium, (c) individually rational, (d) strongly implementable at NE, and (e) feasible at or off of equilibrium. Another contribution of the paper is that the design of our mechanism's tolls for the travelers' utilization of the network's resources is intuitive enough to provide a good understanding of the practical implementation of the mechanism.


The remainder of the paper is organized as follows. In Section \ref{section_formulation}, we present the mathematical formulation of our proposed mechanism. In Section \ref{section_specification_md}, we provide its formal specification, and then, in Section \ref{section_properties}, we formally show that our proposed mechanism has properties (a) - (e). Finally, in Section \ref{section_conclusion}, we draw some concluding remarks and discuss potential avenues for future research.

\section{Mathematical Formulation} \label{section_formulation}

We consider a transportation network represented by a graph $\mathcal{G} = (\mathcal{V}, \mathcal{E})$, where $\mathcal{V} = \{1, \dots, V\}$ corresponds to the index set of vertices and $\mathcal{E} = \{1, \dots, E\}$ the index set of directed edges. Each edge $e \in \mathcal{E}$ has a fixed capacity, i.e., $c_e \in \mathbb{R}_{> 0}$, e.g., a high capacity $c_e$ corresponds to a highway while a low capacity corresponds to an urban road. There are $n \in \mathbb{N}_{\geq 2}$ travelers represented by the set $\mathcal{I} = \{1, 2, \dots, n\}$. Each traveler $i$ is associated with an origin-destination pair $(o_i, d_i) \in \mathcal{V} \times \mathcal{V}$. The utilization of the roads in $\mathcal{G}$ is done by the use of CAVs, where each CAV corresponds to one traveler. We consider 100\% penetration rate of CAVs.



\begin{definition}\label{defn_travel_time}
    A traveler $i\in\mathcal{I}$ seeks to commute from $o_i$ to $d_i$ via a given and fixed route $p_i(o_i, d_i)$ at preferred travel time, denoted by $\theta_i \in \Theta_i = [0, + \infty)$. In game theoretic terms, $\theta_i$ is the type of traveler $i$. We denote the type profile of all travelers by $\theta = (\theta_1, \theta_2, \dots, \theta_n)$.
\end{definition}

In addition, each edge $e \in \mathcal{E}$ in the network is characterized by $\underline{\theta} ^ e$ which represents the minimum possible travel time that any traveler can experience if edge $e \in \mathcal{E}$ is an empty (uncongested) road. This allows us to take into account rural or urban roads of different traffic capacities in the transportation network $\mathcal{G}$.




Next, each traveler $i \in \mathcal{I}$ has a cost function $v_i$ which expresses the ``commute-satisfaction" that traveler $i$ experiences from commuting in $(o_i, d_i)$ with travel time $\theta_i$. We expect $v_i$ and $\theta_i$ to be traveler $i$'s private information (i.e., unknown to the network manager).

\begin{assumption}\label{assumption_satisfaction_function}
    Assume that $v_i : \mathbb{R}_{\geq 0} \to \mathbb{R}$ is continuously differentiable, strictly concave, and strictly decreasing in $\theta_i$ with $v_i(0) = 0$.
\end{assumption}



Next, we denote by $t_i$ the monetary payment made by traveler $i$ to the network manager. We have $t_i \in \mathbb{R}$, i.e., a positive $t_i$ means that traveler $i$ pays a toll and a negative $t_i$ means that $i$ receives a monetary subsidy. Thus, in our mechanism, traveler $i$'s total utility is given by
\begin{equation}\label{eqn_utility}
    u_i(\theta_i, t_i) = v_i(\theta_i) - t_i.
\end{equation}

We consider that all travelers are rational and intelligent decision-makers in the system. Each traveler $i \in \mathcal{I}$ has two objectives: (i) to reach their destination, and (ii) to maximize their own utility. A social consequence of the travelers' behavior is that there is an individual disregard of the overall good of the system and it is natural to expect that at least one edge $e \in \mathcal{E}$ will exceed its maximum capacity. If the network manager does not intervene, then congestion is to be expected. So, using appropriate monetary payments, the network manager can incentivize travelers to report truthfully their type $\theta_i$ and allocate travel time on each edge $e \in \mathcal{E}$ in such a way that all travelers are satisfied and congestion is prevented. To achieve this, the network manager's objective is to maximize the overall ``social welfare" of the network and ensure that the network remains congestion-free. The social welfare function is defined as the $\sum_{i \in \mathcal{I}} v_i(\theta_i)$ and denoted by $\mathcal{W}$. We choose to define the social welfare as the sum of the utilities of all travelers because we follow the utilitarian principles, i.e., we measure the collective benefits gained by the travelers in the transportation network.

Next, note that the travelers' strategic behavior indicates a natural competition over the utilization of the edges.

\begin{definition}
    Given $e \in \mathcal{E}$, we define the following sets: (i) the set $\mathcal{S}_e$ of all travelers that edge $e$ is part of their route that connects $o_i$ and $d_i$, and (ii) the set $\mathcal{R}_i$ of traveler $i$'s edges that consist of their route $p_i(o_i, d_i)$.
\end{definition}

Before we continue, we introduce the notion of \emph{reverse value of time}, say parameter $\alpha_i \in \mathbb{R}_{\geq 1}$, that can vary among each traveler $i \in \mathcal{I}$. The social parameter $\alpha_i \in (\underline{\alpha}, \overline{\alpha})$, where $\underline{\alpha} \geq 1$, can be interpreted as follows. If $\alpha_i \to \overline{\alpha},$ traveler $i$ is willing to tolerate a slightly higher travel time, while if $\alpha_i \to \underline{\alpha}$, traveler $i$ is not willing to tolerate a higher travel time. We assume that each traveler $i \in \mathcal{I}$ can be classified based on socio-economic demographic data (e.g., mobility choices and travel tendencies, civil status and income) \cite{brownstone2003}.


\begin{problem}\label{problem_centralized}
    The centralized social-welfare maximization problem is presented below:
        \begin{gather}
            \max_{\theta_i ^ e} \sum_{i \in \mathcal{I}} \sum_{e \in \mathcal{R}_i} v_i(\theta_i ^ e), \notag \\
            \text{subject to: }
            \theta_i ^ e \geq \underline{\theta} ^ e, \quad \forall e \in \mathcal{E}, \quad \forall i \in \mathcal{I}, \label{constraint_nonnegativity} \\
            \sum_{i \in \mathcal{S}_e} \alpha_i \cdot \theta_i ^ e \leq c_e, \quad \forall e \in \mathcal{E} \label{constraint_capacity},
        \end{gather}
    where $\theta_i ^ e$ is the travel time of traveler $i$ on edge $e$ with $\theta_i = \sum_{e \in \mathcal{R}_i} \theta_i ^ e$, and $v_i(\theta_i) = \sum_{e \in \mathcal{R}_i} v_i(\theta_i ^ e)$; inequalities \eqref{constraint_nonnegativity} ensure that each traveler $i$'s travel time $\theta_i ^ e$ on all edges $e \in \mathcal{E}$ is non-negative but not zero at any case; and inequality \eqref{constraint_capacity} expresses the network's capacity on each edge $e \in \mathcal{E}$.
\end{problem}

By Assumption \ref{assumption_satisfaction_function}, it is imperative to impose a network threshold on the feasible values of each traveler $i$'s travel time. We can achieve this in \eqref{constraint_nonnegativity} by only accepting travelers' travel times that are above $\underline{\theta} ^ e$. Also, we interpret $\theta_i = 0$ to be the case of traveler $i$ not seeking to commute instead of wishing to commute in zero time.

Problem \ref{problem_centralized} would be a standard convex optimization problem if the strategic travelers were expected to report their private information truthfully. As this is unreasonable to expect from strategic decision-makers, the network manager in order to solve Problem \ref{problem_centralized} is tasked to elicit the necessary information using monetary incentives.

\subsection{The Mechanism Design Problem}

In our formulation, we use the NE as our solution concept. However, a NE requires complete information. But, we can interpret a NE as the fixed point of an iterative process in an incomplete information setting \cite{reichelstein,ledyard}. This is in accordance with J. Nash's interpretation of a NE, i.e., the complete information NE can be a possible equilibrium of an iterative learning process.

In this section, we present the fundamentals of an indirect and decentralized resource allocation mechanism following the framework presented in \cite{hurwicz}. First, we need to specify a set of messages that all travelers have access and are able to use in order to communicate information. Based on this information, travelers make decisions which affect the reaction of the network manager. Once the communication between the network manager and the travelers is complete, we say that the mechanism induces a game; strategic travelers then compete for the network's resources. In this line of reasoning, we define formally below what we mean by indirect mechanism and induced game.

An indirect mechanism can be described as a tuple of two components, namely $\langle M, g \rangle$. We write $M = (M_1, M_2, \dots, M_n)$, where $M_i$ defines the set of possible messages of traveler $i$. Thus, the travelers' complete message space is $\mathcal{M} = M_1 \times \dots \times M_n$. The component $g$ is the outcome function defined by $g : \mathcal{M} \to \mathcal{O}$ which maps each message profile to the output space $\mathcal{O} = \{(\theta_1, \dots, \theta_n), (t_1, \dots, t_n) \; | \; \theta_i \in \mathbb{R}_{\geq 0}, \; t_i \in \mathbb{R}\}$,
%
%
i.e., the set of all possible travel time allocations to the travelers and the monetary payments (e.g., toll, subsidies) made or received by the travelers. The outcome function $g$ determines the outcome, namely $g(\mu)$ for any given message profile $\mu = (m_1, \dots, m_n) \in \mathcal{M}$. The payment function $t_i : \mathcal{M} \to \mathbb{R}$ determines the monetary payment made or received by a traveler $i \in \mathcal{I}$.

\begin{definition}\label{defn_induced}
    A mechanism $\langle M, g \rangle$ together with the utility functions $(u_i)_{i \in \mathcal{I}}$ induce a game $\langle M, g, (u_i)_{i \in \mathcal{I}} \rangle$, where each utility $u_i$ is evaluated at $g(\mu)$ for each traveler $i \in \mathcal{I}$.
\end{definition}


\begin{definition}\label{defn_ne}
    Consider a game $\langle M, g, (u_i)_{i \in \mathcal{I}} \rangle$. The solution concept of NE is a message profile $\mu ^ *$ such that $u_i (g(m_i ^ *, m_{- i} ^ *)) \geq u_i(g(m_i, m_{- i} ^ *))$, for all $m_i \in M_i$ and for each $i \in \mathcal{I}$, where $m_{- i} = (m_1, \dots, m_{i - 1}, m_{i + 1}, \dots, m_n)$.
\end{definition}



\begin{definition}\label{defn_ir}
    Let the \emph{utility of no participation} of a traveler $i \in \mathcal{I}$ to be given by $u_i(0, 0) = v_i(0) = 0$. Then, we say that a mechanism is individually rational if $u_i (g(\mu ^ *)) \geq 0$, for all $i \in \mathcal{I}$, and all NE $\mu ^ * \in \mathcal{M}$.
\end{definition}

\section{Proposed Mechanism}\label{section_specification_md}

In this section, we show how the network manager can design monetary incentives which achieve the desirable goal, i.e.,  align everyone's decisions by incentivizing them to send social-welfare supporting messages. But first, we need to establish the informational structure of our mechanism. The network manager has complete knowledge of the network's topology and resources and travelers know only their own utility which they report privately to the network manager. Before we continue, let us define explicitly a traveler's message. For each $i \in \mathcal{I}$, message $m_i \in M_i$ is given by $m_i = (\tilde{\theta}_i, \tau_i)$,
%
%
where $\tilde{\theta}_i = (\tilde{\theta}_i ^ e : e \in \mathcal{R}_i)$ is the reported preferred travel time of traveler $i$, and $\tau_i = (\tau_i ^ e : e \in \mathcal{R}_i)$ is the price traveler $i$ is willing to pay for $\tilde{\theta}_i$ along their route.

\begin{definition}\label{defn_others_taxes}
    The average price of all travelers that compete to utilize edge $e \in \mathcal{E}$ other than traveler $i$ is given by $\tau_{- i} ^ e = \sum_{j \in \mathcal{S}_e : j \neq i} \frac{\tau_j ^ e}{|\mathcal{S}_e| - 1}$.
%
%
\end{definition}

Next, for each traveler $i$ and for each edge $e \in \mathcal{E}$ of their route, we endow a fair share for each edge $e \in \mathcal{E}$, i.e., $c_e / |\mathcal{S}_e|$. This can help us design the monetary payments that each traveler is asked to pay. Using Definition \ref{defn_others_taxes}, we propose the following payments, for a particular edge $e \in \mathcal{E}$,
\begin{multline}\label{eqn_tax}
    t_i ^ e(\mu) = \tau_{- i} ^ e \cdot \left(\alpha_i \cdot \tilde{\theta}_i ^ e - \frac{c_e}{|\mathcal{S}_e|}\right) + (\tau_i ^ e - \nu_e) ^ 2 \\
    + \tau_{- i} ^ e \cdot (\tau_i ^ e - \tau_{- i} ^ e) \cdot \left(c_e - \sum_{i \in \mathcal{S}_e} \alpha_i \cdot \tilde{\theta}_i ^ e\right) ^ 2.
\end{multline}
The first term in \eqref{eqn_tax} is the monetary payments (e.g., toll, subsidies) made or received by traveler $i$ corresponding to their travel time allocation $\tilde{\theta}_i ^ e$ on edge $e \in \mathcal{E}$. Intuitively, this means that traveler $i$ will pay a toll that is determined by the other travelers' recommendations and only for the excess of the fair share of travelers over a particular edge. Using this formulation, there is no incentive for traveler $i$ to lie in an attempt to reduce their payment to the network. The second term in \eqref{eqn_tax} corresponds to a penalty that traveler $i$ will pay if she reports a different price $\tau_i ^ e$ from $\nu_e$, where $\nu_e$ represents the Lagrange multiplier corresponding to the capacity constraint defined formally next. The third term in \eqref{eqn_tax}, collectively incentivizes all travelers to bid the same price per unit of travel time and to utilize the full capacity of each edge $e \in \mathcal{E}$.


Thus, given any message profile $\mu$, the total monetary payment $t_i(\mu)$ for traveler $i$ is
\begin{equation}
    t_i(\mu) = \sum_{e \in \mathcal{R}_i} t_i ^ e(\mu) + \phi_i(\tilde{\theta}_i),
\end{equation}
where $\phi_i$ is a monetary incentive that encourages traveler $i$ to report a reasonable travel time demand respecting road rules and the network's efficiency goals. In detail, we have
\begin{equation}\label{eqn_penalty}
    \phi_i(\tilde{\theta}_i) = 
        \begin{cases}
            \gamma, & \; \exists e \in \mathcal{R}_i, \text{ s.t. } \tilde{\theta}_i ^ e > \underline{\theta} ^ e \text{ and } |\mathcal{S}_e| = 1, \\
            0, & \; \exists e \in \mathcal{R}_i, \text{ s.t. } \tilde{\theta}_i ^ e > \underline{\theta} ^ e \text{ and } |\mathcal{S}_e| \geq 2, \\
            \delta, & \; \exists e \in \mathcal{R}_i, \text{ s.t. } \tilde{\theta}_i ^ e = \underline{\theta} ^ e \text{ and } |\mathcal{S}_e| \geq 2, \\
            0, & \; \exists e \in \mathcal{R}_i, \text{ s.t. } \tilde{\theta}_i ^ e = \underline{\theta} ^ e \text{ and } |\mathcal{S}_e| = 1,
        \end{cases}
\end{equation}
where $\gamma, \delta \in \mathbb{R}_{> 0}$ represent the imposition of very high penalties. It is necessary to impose such penalties since for the first case in \eqref{eqn_penalty}, traveler $i$ violates the goal of efficiency in the network and for the third case in \eqref{eqn_penalty}, traveler $i$ violates the goal of road safety. In the severe case of $\tilde{\theta}_i ^ e < \underline{\theta} ^ e$, we have $\phi_i(\tilde{\theta}_i) = + \infty$.

\section{Properties of the Mechanism}\label{section_properties}

In this section, we present the properties of our proposed mechanism.

\begin{lemma}
    Problem \ref{problem_centralized} has a unique optimal solution.
\end{lemma}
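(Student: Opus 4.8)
The plan is to recognize Problem~\ref{problem_centralized} as the maximization of a strictly concave objective over a nonempty, compact, convex feasible set, so that existence follows from the Weierstrass extreme value theorem and uniqueness from strict concavity. First I would stack the decision variables into a single vector $\mathbf{x} = (\theta_i^e)_{i \in \mathcal{I},\, e \in \mathcal{R}_i}$ and regard the objective $\mathcal{W}(\mathbf{x}) = \sum_{i \in \mathcal{I}} \sum_{e \in \mathcal{R}_i} v_i(\theta_i^e)$ as a function on the corresponding Euclidean space. The structure I want to exploit is that $\mathcal{W}$ is \emph{separable}: each summand $v_i(\theta_i^e)$ depends on exactly one coordinate of $\mathbf{x}$.

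Next I would verify that the feasible set defined by \eqref{constraint_nonnegativity} and \eqref{constraint_capacity} is nonempty, closed, convex, and bounded. Convexity and closedness are immediate, since both families of constraints are affine in $\mathbf{x}$, making the feasible set a finite intersection of closed half-spaces. For boundedness, the lower bounds $\theta_i^e \ge \underline{\theta}^e$ combined with \eqref{constraint_capacity} give, for each pair $(i,e)$, the estimate $\alpha_i \theta_i^e \le c_e - \sum_{j \in \mathcal{S}_e,\, j \ne i} \alpha_j \underline{\theta}^e$; since $\alpha_i \ge 1 > 0$, each coordinate is bounded above, and together with the lower bounds the feasible set sits inside a box, hence is compact. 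Nonemptiness requires that the minimum-time allocation $\theta_i^e = \underline{\theta}^e$ be admissible, i.e.\ $\sum_{i \in \mathcal{S}_e} \alpha_i \underline{\theta}^e \le c_e$ for every $e \in \mathcal{E}$, which I would invoke as a standing physical assumption that the network can admit all travelers at their uncongested travel times. Continuity of $\mathcal{W}$ follows from Assumption~\ref{assumption_satisfaction_function} (each $v_i$ is continuously differentiable, hence continuous), so the Weierstrass theorem yields existence of a maximizer.

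For uniqueness, the key step is to upgrade the per-coordinate strict concavity guaranteed by Assumption~\ref{assumption_satisfaction_function} to \emph{joint} strict concavity of $\mathcal{W}$. I would argue that a separable sum of strictly concave single-variable functions, one in each distinct coordinate, is strictly concave on the product space: given $\mathbf{x} \ne \mathbf{x}'$ there is at least one coordinate on which they differ, and strict concavity along that coordinate together with concavity along the rest yields $\mathcal{W}(\lambda \mathbf{x} + (1-\lambda)\mathbf{x}') > \lambda \mathcal{W}(\mathbf{x}) + (1-\lambda)\mathcal{W}(\mathbf{x}')$ for every $\lambda \in (0,1)$. With joint strict concavity established, uniqueness is the standard midpoint contradiction: if two distinct feasible maximizers attained the optimal value, their midpoint would be feasible by convexity and would return a strictly larger value, which is impossible.

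I expect the only genuinely delicate points to be these two bookkeeping items rather than any deep argument, since the result is a textbook convex-optimization fact once the setup is in place. Specifically, the main obstacle is articulating the passage from coordinatewise strict concavity to joint strict concavity cleanly (making sure the separable structure, with each variable $\theta_i^e$ appearing in precisely one term, is used correctly), and flagging the nonemptiness/feasibility condition on the capacities, which is needed for a maximizer to exist at all.
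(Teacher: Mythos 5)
Your proposal is correct and follows essentially the same route as the paper's proof: strict concavity of the (separable) objective combined with a nonempty, convex, compact feasible region yields existence via Weierstrass and uniqueness via the standard midpoint argument. You simply supply details the paper leaves implicit --- the explicit boundedness estimate, the feasibility condition $\sum_{i \in \mathcal{S}_e} \alpha_i \underline{\theta}^e \leq c_e$ needed for nonemptiness, and the passage from coordinatewise to joint strict concavity --- all of which are handled correctly.
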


\begin{proof}
    The objective function of Problem \ref{problem_centralized} is a sum of several strictly concave functions. Hence, it is strictly concave. Thus, the necessary KKT conditions are also sufficient for optimality. Since the feasible region is non-empty, convex, and compact, we conclude that Problem \ref{problem_centralized} has always a unique optimal solution.
\end{proof}

\begin{lemma}\label{lemma_kkt}
    A solution to Problem \ref{problem_centralized} is unique and optimal if, and only if, it satisfies the feasibility conditions of Problem \ref{problem_centralized} and there exist Lagrange multipliers $\lambda = (\lambda_i ^ e : e \in \mathcal{E})_{i \in \mathcal{I}}$ and $\nu = (\nu_e)_{e \in \mathcal{E}}$ that satisfy the following conditions:
        \begin{align}\label{eqn_kkt_1}
            \frac{\partial v_i({\theta_i ^ e}{} ^ *)}{\partial \theta_i ^ e} + {\lambda_i ^ e}{} ^ * - \sum_{e \in \mathcal{R}_i} \alpha_i \cdot \nu_e ^ * = 0, \\
            {\lambda_i ^ e}{} ^ * \cdot ({\theta_i ^ e}{} ^ * - \underline{\theta} ^ e) = 0, \quad \forall e \in \mathcal{E}, \quad \forall i \in \mathcal{I}, \label{eqn_kkt_2} \\
            \nu_e ^ * \cdot \left(\sum_{i \in \mathcal{S}_e} \alpha_i \cdot {\theta_i ^ e}{} ^ * - c_e\right) = 0, \quad \forall e \in \mathcal{E}, \label{eqn_kkt_3} \\
            {\lambda_i ^ e}{} ^ *, \nu_e ^ * \geq 0, \quad \forall e \in \mathcal{E}, \quad \forall i \in \mathcal{I}. \label{eqn_kkt_4}
        \end{align}
\end{lemma}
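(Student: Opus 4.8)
The plan is to recognize Problem \ref{problem_centralized} as a concave maximization program whose constraints are all affine, and then to invoke the standard Karush--Kuhn--Tucker (KKT) theory, under which, for such structured problems, the KKT conditions are simultaneously necessary and sufficient for global optimality. By the preceding lemma we already know that a unique optimizer exists and that the feasible region is non-empty, convex, and compact, so the task reduces to establishing the claimed equivalence between optimality and feasibility together with the existence of multipliers satisfying \eqref{eqn_kkt_1}--\eqref{eqn_kkt_4}.

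First I would form the Lagrangian by assigning a multiplier $\lambda_i^e \geq 0$ to each lower-bound constraint $\underline{\theta}^e - \theta_i^e \leq 0$ from \eqref{constraint_nonnegativity} and a multiplier $\nu_e \geq 0$ to each capacity constraint $\sum_{i \in \mathcal{S}_e} \alpha_i \cdot \theta_i^e - c_e \leq 0$ from \eqref{constraint_capacity}. Differentiating this Lagrangian with respect to each decision variable $\theta_i^e$ and setting the result to zero yields the stationarity condition \eqref{eqn_kkt_1}, in which the term $\partial v_i / \partial \theta_i^e$ originates from the objective, the term $\lambda_i^e$ from the gradient of the lower-bound constraint, and the capacity multipliers $\nu_e$ (scaled by $\alpha_i$) from the edges of traveler $i$'s route $\mathcal{R}_i$. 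The relations \eqref{eqn_kkt_2} and \eqref{eqn_kkt_3} are then the complementary-slackness conditions for the two families of constraints, and \eqref{eqn_kkt_4} is dual feasibility.

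For the necessity direction $(\Rightarrow)$, I would argue that since every constraint in \eqref{constraint_nonnegativity} and \eqref{constraint_capacity} is affine in the variables $\theta_i^e$, the affine (linearity) constraint qualification holds automatically at every feasible point; hence the KKT conditions are necessary at the unique optimizer, which guarantees the existence of the multiplier vectors $\lambda$ and $\nu$. For the sufficiency direction $(\Leftarrow)$, I would use that the objective is strictly concave by Assumption \ref{assumption_satisfaction_function} and that the feasible set is convex, so that any point admitting multipliers satisfying \eqref{eqn_kkt_1}--\eqref{eqn_kkt_4} is a global maximizer; combining this with the uniqueness established in the preceding lemma identifies such a point as \emph{the} unique optimal solution.

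The main obstacle is not any deep analytical difficulty, since the affineness of the constraints makes a Slater-type condition unnecessary and renders the constraint qualification immediate. Rather, the care required lies in the sign bookkeeping within the Lagrangian, so that the gradients of the lower-bound and capacity constraints enter \eqref{eqn_kkt_1} with the correct signs, and in verifying that the stationarity condition correctly aggregates the per-edge capacity multipliers over each traveler's route $\mathcal{R}_i$.
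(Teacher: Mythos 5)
Your proposal is correct and follows essentially the same route as the paper: form the Lagrangian with multipliers for the lower-bound and capacity constraints, read off stationarity, complementary slackness, and dual feasibility, and invoke concavity of the objective with convexity of the feasible set to get necessity and sufficiency. You are in fact somewhat more careful than the paper, which asserts necessity and sufficiency without explicitly noting that the affine constraints supply the constraint qualification or separating the two directions of the equivalence.
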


\begin{proof}
    First, let us derive the Lagrangian of Problem \ref{problem_centralized}:
        \begin{multline}\label{lagrangian}
            \mathcal{L}(\theta, \lambda, \nu) = \sum_{i \in \mathcal{I}} \sum_{e \in \mathcal{R}_i} v_i(\theta_i ^ e) + \sum_{i \in \mathcal{I}} \sum_{e \in \mathcal{E}} \lambda_i ^ e \cdot (\theta_i ^ e - \underline{\theta} ^ e) \\
            - \sum_{e \in \mathcal{E}} \nu_e \cdot \left(\sum_{i \in \mathcal{S}_e} \alpha_i \cdot \theta_i ^ e - c_e\right).
        \end{multline}
    From \eqref{lagrangian}, it is easy to derive the KKT conditions, i.e.,
        \begin{align}
            \frac{\partial v_i({\theta_i ^ e}{} ^ *)}{\partial \theta_i ^ e} + {\lambda_i ^ e}{} ^ * - \sum_{e \in \mathcal{R}_i} \alpha_i \cdot \nu_e ^ * = 0, \\
            {\lambda_i ^ e}{} ^ * \cdot ({\theta_i ^ e}{} ^ * - \underline{\theta} ^ e) = 0, \quad \forall e \in \mathcal{E}, \quad \forall i \in \mathcal{I}, \\
            \nu_e ^ * \cdot \left(\sum_{i \in \mathcal{S}_e} \alpha_i \cdot {\theta_i ^ e} ^ * - c_e\right) = 0, \quad \forall e \in \mathcal{E}, \\
            {\lambda_i ^ e}{} ^ *, \nu_e ^ * \geq 0, \quad \forall e \in \mathcal{E}, \quad \forall i \in \mathcal{I}.
        \end{align}
    Since the KKT conditions are necessary and sufficient to guarantee the optimality of any allocation of travel time that satisfies them, it is enough to find ${\lambda_i ^ e}{} ^ *$ and $\nu_e ^ *$ such that the above conditions are satisfied.
\end{proof}

\begin{theorem}[Feasibility]\label{thm_feasibility}
     For any message profile $\mu$, the corresponding travel time allocation $\theta$ is a feasible point of Problem \ref{problem_centralized}.
\end{theorem}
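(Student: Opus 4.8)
The plan is to check directly that the allocation $\theta$ returned by the outcome function $g$ on an \emph{arbitrary} message profile $\mu$ satisfies both constraint families of Problem~\ref{problem_centralized}: the per-edge lower bounds \eqref{constraint_nonnegativity} and the per-edge capacity bounds \eqref{constraint_capacity}. Writing the granted travel time as $\theta_i^e$ for $i \in \mathcal{I}$ and $e \in \mathcal{R}_i$, the assertion reduces to the two claims $\theta_i^e \geq \underline{\theta}^e$ for every $i,e$ and $\sum_{i \in \mathcal{S}_e} \alpha_i\, \theta_i^e \leq c_e$ for every $e \in \mathcal{E}$. I would treat these two families separately, since they are enforced by different pieces of the mechanism, and since the theorem must hold \emph{off} equilibrium as well, feasibility cannot rely on any equilibrium reasoning.

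The lower bound \eqref{constraint_nonnegativity} is the easy half. I would read it off the penalty $\phi_i$ in \eqref{eqn_penalty}: any report with $\tilde{\theta}_i^e < \underline{\theta}^e$ carries $\phi_i(\tilde{\theta}_i) = +\infty$, hence $t_i = +\infty$ and $u_i = -\infty$ by \eqref{eqn_utility}, so such a message never survives in the admissible message space on which $g$ is evaluated. Every profile $\mu$ under consideration therefore reports $\tilde{\theta}_i^e \geq \underline{\theta}^e$ for all $e \in \mathcal{R}_i$ and all $i \in \mathcal{I}$, and whenever $g$ can grant these reports outright the bound $\theta_i^e \geq \underline{\theta}^e$ holds verbatim.

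The capacity constraint \eqref{constraint_capacity} is where the work lies, and I expect it to be the main obstacle: on a shared edge $e$ no single traveler controls the weighted aggregate $\sum_{i \in \mathcal{S}_e} \alpha_i \tilde{\theta}_i^e$, so feasibility cannot come from an individual restriction on each message and must instead be built into the outcome map $g$ itself. The plan here is to appeal to the construction of $g$: rather than returning the raw reports whenever their weighted sum would overshoot $c_e$, the manager grants travel times capped so that the per-edge aggregate never exceeds capacity. Concretely, I would argue that for each edge $g$ returns a point of the set carved out jointly by \eqref{constraint_nonnegativity} and \eqref{constraint_capacity}; since that feasible set is the nonempty, convex, compact polytope already identified for Problem~\ref{problem_centralized}, this capping is well defined and the output satisfies $\sum_{i \in \mathcal{S}_e} \alpha_i\, \theta_i^e \leq c_e$ by construction.

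Combining the two halves, for an arbitrary $\mu$ the induced $\theta$ meets \eqref{constraint_nonnegativity} and \eqref{constraint_capacity} simultaneously and hence is a feasible point of Problem~\ref{problem_centralized}, which is the claim. The delicate step to make fully rigorous is precisely the reconciliation between the two constraint families: I would need to pin down how $g$ resolves an over-demanded edge, i.e.\ to confirm that forcing $\sum_{i \in \mathcal{S}_e} \alpha_i \theta_i^e \leq c_e$ can always be done without pushing any $\theta_i^e$ below $\underline{\theta}^e$ (equivalently, that the feasible polytope stays nonempty for every admissible profile). Once that is secured, the lower-bound half is a one-line consequence of the penalty definition \eqref{eqn_penalty}.
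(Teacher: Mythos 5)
Your core idea---that feasibility must be built into the outcome function $g$ itself, with the manager projecting over-demanded reports back into the feasible region---is exactly the paper's argument. The paper's proof is a one-step version of your capacity half: it denotes the joint constraint set of Problem~\ref{problem_centralized} by $\mathcal{C}$ and defines $g$ to return $\tilde{\theta}_i$ when $\tilde{\theta}_i \in \mathcal{C}$ and a boundary point $\theta_i^0$ of $\mathcal{C}$ otherwise, so the allocation lies in $\mathcal{C}$ (which is closed) in either case. Your observation that the delicate step is specifying \emph{which} capped point is returned, and that the capping must respect both constraint families simultaneously, is legitimate; the paper does not resolve it either, simply asserting the existence of ``the point on the boundary of $\mathcal{C}$.''

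The genuine gap is in your lower-bound half. You argue that a report with $\tilde{\theta}_i^e < \underline{\theta}^e$ incurs $\phi_i = +\infty$ and hence ``never survives in the admissible message space,'' but that is an incentive argument: the penalty makes such a message unattractive, it does not remove it from $M_i$. The theorem quantifies over \emph{any} message profile $\mu$, including irrational ones, and you yourself state at the outset that feasibility cannot rely on equilibrium reasoning---so this step contradicts your own ground rules. The fix is the one the paper uses and that you already deploy for the capacity constraint: treat \eqref{constraint_nonnegativity} and \eqref{constraint_capacity} uniformly as defining $\mathcal{C}$, and let $g$ project any report violating \emph{either} family onto $\mathcal{C}$. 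With that repair your argument collapses into the paper's, and the penalty $\phi_i$ plays no role in feasibility at all (it is only needed later, for the equilibrium properties).
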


\begin{proof}
    Consider any traveler $i$ and denote by $\mathcal{C}$ the constraint set of Problem \ref{problem_centralized}. Then, for a reported preferred travel time $\tilde{\theta}_i$, the travel time $\theta_i$ of Problem \ref{problem_centralized} generated by the outcome function is equal to (i) $\tilde{\theta}_i$ if $\tilde{\theta}_i \in \mathcal{C}$; or (ii) $\theta_i ^ 0$ if $\tilde{\theta}_i \notin \mathcal{C}$,
    where $\tilde{\theta}_i = (\tilde{\theta}_i ^ e : e \in \mathcal{R}_i)$, and $\theta_i ^ 0$ is the point on the boundary of $\mathcal{C}$ (i.e., we ignore the ``unreasonable" demand of traveler $i$ and allocate only the portion of the resource that is available). By construction, it follows immediately that if $\tilde{\theta}_i \in \mathcal{C}$, then the allocation $\theta_i$ is feasible for any traveler $i \in \mathcal{I}$. In the case of $\tilde{\theta}_i \notin \mathcal{C}$, the allocation is on the boundary of $\mathcal{C}$, hence it is still feasible as the constraint set of Problem \ref{problem_centralized} is closed. Thus, the result follows.
\end{proof}

\begin{lemma}\label{lemma_lagrange_prices}
    Let $\mu ^ *$ be a NE of the induced game. Then, we have ${\tau_i ^ e} ^ * = \nu_e ^ *$, for all $i \in \mathcal{I}$ and each $e \in \mathcal{R}_i$. In addition, it follows that $\tau_{- i} ^ e = \sum_{j \in \mathcal{S}_e : j \neq i} \frac{\tau_j ^ e}{|\mathcal{S}_e| - 1} = {\tau_i ^ e} ^ *$.
\end{lemma}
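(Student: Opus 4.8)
The plan is to exploit the fact that the price sub-message $\tau_i^e$ enters a traveler's payoff only through the payment. Fix a NE $\mu^*$ and a traveler $i$ with $e \in \mathcal{R}_i$. By Definition~\ref{defn_ne}, $\tau_i^{e*}$ is a best response, hence it maximizes $u_i(g(\mu)) = v_i(\theta_i) - t_i(\mu)$ over $\tau_i^e$ with every other coordinate frozen. The value term $v_i(\theta_i)$ depends on the profile only through the allocated travel times, which are generated from the reported $\tilde\theta$'s and are insensitive to the prices; likewise $\phi_i$ depends only on $\tilde\theta_i$, and among the summands of $t_i = \sum_{e\in\mathcal{R}_i} t_i^e + \phi_i$ the coordinate $\tau_i^e$ appears solely inside the single term $t_i^e$ of \eqref{eqn_tax}. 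Therefore maximizing $u_i$ over $\tau_i^e$ is equivalent to minimizing $t_i^e$ over $\tau_i^e$.

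First I would record that, as a function of $\tau_i^e$ alone, $t_i^e$ in \eqref{eqn_tax} is a strictly convex quadratic: the average price $\tau_{-i}^e$ of Definition~\ref{defn_others_taxes} and the capacity slack $\big(c_e - \sum_{j\in\mathcal{S}_e}\alpha_j\tilde\theta_j^e\big)$ do not contain $\tau_i^e$, so the first and third terms are affine in $\tau_i^e$ while the second contributes the strictly convex $(\tau_i^e-\nu_e)^2$. Setting the derivative to zero yields the stationarity condition
\[
2\,(\tau_i^{e*}-\nu_e) + \tau_{-i}^e\Big(c_e - \sum_{j\in\mathcal{S}_e}\alpha_j\,\tilde\theta_j^{e*}\Big)^2 = 0,
\]
valid for every $i\in\mathcal{S}_e$. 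Writing these equations simultaneously for all travelers sharing edge $e$ gives a linear system in the prices whose coefficient of $\tau_i^{e*}$ and whose right-hand side are identical across $i$; this forces all $\tau_i^{e*}$ to a common value $\tau^e$, and substituting back collapses the relation to $\tau^e\big(1+\tfrac12(c_e-\sum_{j}\alpha_j\tilde\theta_j^{e*})^2\big)=\nu_e^*$.

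The hard part will be to show that the capacity slack vanishes at equilibrium, i.e.\ $\sum_{j\in\mathcal{S}_e}\alpha_j\,\tilde\theta_j^{e*}=c_e$, since only then does the displayed condition reduce to the asserted $\tau_i^{e*}=\nu_e^*$; this is precisely the behaviour the third term of \eqref{eqn_tax} is designed to enforce, so it cannot be read off either first-order condition in isolation. I would discharge it by pairing the price best response above with the best response in $\tilde\theta_i^e$: once the prices are equal the stationarity condition in $\tilde\theta_i^e$ matches the gradient part of the KKT system of Lemma~\ref{lemma_kkt}, and a deviation argument rules out a persistent strictly positive slack at a NE. The degenerate slack case is then handled separately, since there one has $\nu_e^*=0$ by the complementary slackness condition \eqref{eqn_kkt_3}, and nonnegativity of the reported prices drives the constrained minimizer of $t_i^e$ to $\tau_i^{e*}=0=\nu_e^*$ directly. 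I expect this coupling of the two best responses, rather than the price first-order condition by itself, to be the delicate step.

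Finally, the second claim is immediate from the first: once $\tau_j^{e*}=\nu_e^*$ for every $j\in\mathcal{S}_e$, the average satisfies $\tau_{-i}^e=\frac{1}{|\mathcal{S}_e|-1}\sum_{j\in\mathcal{S}_e:\,j\neq i}\tau_j^{e*}=\nu_e^*=\tau_i^{e*}$, which is exactly the stated equality.
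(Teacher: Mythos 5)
Your starting point is the same as the paper's: freeze $\tilde\theta$ and every other coordinate, observe that $\tau_i^e$ enters $u_i$ only through the single summand $t_i^e$ in \eqref{eqn_tax}, and invoke Definition~\ref{defn_ne} to say that $\tau_i^{e\,*}$ minimizes that strictly convex quadratic. In fact your execution of this step is \emph{more} careful than the paper's: your first-order condition $2(\tau_i^{e*}-\nu_e)+\tau_{-i}^{e}\bigl(c_e-\sum_{j\in\mathcal{S}_e}\alpha_j\tilde\theta_j^{e*}\bigr)^2=0$ is the correct stationarity condition, and it shows that the common price satisfies $\tau^e\bigl(1+\tfrac12\xi\bigr)=\nu_e^*$ with $\xi$ the squared capacity slack --- not $\tau^e=\nu_e^*$ outright. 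The paper's proof glosses over exactly this point by asserting that the minimizer of the right-hand side of \eqref{equal_prices} simultaneously sets $\tau_i^e=\tau_{-i}^{e\,*}$ and minimizes $\sum_e(\tau_i^e-\nu_e^*)^2$, which is only consistent when the cross term vanishes. You have correctly located the load-bearing claim.

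The problem is that you then do not prove it. Showing that $\xi=0$ at every NE (or that $\nu_e^*=0$ in the slack case) is announced as "the hard part" and discharged only by naming a strategy ("a deviation argument rules out a persistent strictly positive slack"); no such deviation is exhibited, so the lemma is not established. Moreover, the two patches you propose for the degenerate case do not work as stated: (i) the complementary slackness condition \eqref{eqn_kkt_3} of Lemma~\ref{lemma_kkt} constrains $\nu_e^*$ via the \emph{optimal} allocation $\theta^*$ of Problem~\ref{problem_centralized}, not via the NE reports $\tilde\theta^*$ --- the identification of the two is the content of Theorem~\ref{thm_strong_implementation}, which itself cites this lemma, so the argument as written is circular; and (ii) the message space never restricts $\tau_i^e$ to be nonnegative, so "nonnegativity of the reported prices" cannot be used to pin the minimizer at $0$. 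As it stands the proposal proves the weaker identity $\tau_i^{e*}=\nu_e^*/(1+\xi/2)$ together with equality of all prices on an edge (itself modulo invertibility of the symmetric linear system when $\xi=2(|\mathcal{S}_e|-1)$), and the final averaging step, while correct, only goes through once the first claim is actually secured.
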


\begin{proof}
    Suppose there is one traveler, say $i$, that deviates from the NE message profile $\mu ^ *$ and instead reports the message $m_i = (\tilde{\theta}_i ^ *, \tau_i)$. This deviation to be justifiable has to provide a higher utility to traveler $i \in \mathcal{I}$. But, we have
        \begin{multline}\label{eqn_lagrange_1}
            v_i(\tilde{\theta}_i ^ *) - t_i(m_i ^ *, m_{- i} ^ *) \geq v_i(\tilde{\theta}_i ^ *) - t_i(m_i, m_{- i} ^ *).
        \end{multline}
    Next, we substitute \eqref{eqn_tax} into \eqref{eqn_lagrange_1}. For ease of notational exposition, let $\xi = \left(c_e - \sum_{i \in \mathcal{S}_e} \alpha_i \cdot {\tilde{\theta}_i ^ e}{} ^ *\right) ^ 2$. Thus,
        \begin{multline}\label{equal_prices}
            \sum_{e \in \mathcal{R}_i} ({\tau_i ^ e} ^ * - \nu_e ^ *) ^ 2 + {\tau_{- i} ^ e} ^ * \cdot ({\tau_i ^ e} ^ * - {\tau_{- i} ^ e} ^ *) \cdot \xi \\
            \leq \sum_{e \in \mathcal{R}_i} (\tau_i ^ e - \nu_e ^ *) ^ 2 + {\tau_{- i} ^ e} ^ * \cdot (\tau_i ^ e - {\tau_{- i} ^ e} ^ *) \cdot \xi.
        \end{multline}
    Since traveler $i$ behaves as a utility-maximizer, we need to minimize the right hand side of \eqref{equal_prices}. Thus, the best price is $\tau_i ^ e = {\tau_{- i} ^ e} ^ *$, and also the solution of the minimization problem $\min_{(\tau_i ^ e)} \sum_{e \in \mathcal{R}_i} (\tau_i ^ e - \nu_e ^ *) ^ 2$. Therefore, at $\mu ^ *$, we have ${\tau_i ^ e} ^ * = \nu_e ^ *$, for all $e \in \mathcal{E}$ and for all $i \in \mathcal{I}$ and $\tau_{- i} ^ e = \sum_{j \in \mathcal{S}_e : j \neq i} \frac{\tau_j ^ e}{|\mathcal{S}_e| - 1} = {\tau_i ^ e} ^ *$ follows immediately.
\end{proof}

\begin{lemma}
    Let $\mu ^ *$ be a NE of the induced game. Then, for every traveler $i \in \mathcal{I}$, we have $\phi_i(\tilde{\theta}_i ^ *) = 0$.
\end{lemma}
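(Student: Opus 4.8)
The plan is to argue by contradiction: assume $\mu^*$ is a NE at which some traveler $i$ incurs a strictly positive penalty $\phi_i(\tilde{\theta}_i^*) > 0$, and exhibit a unilateral deviation that strictly raises $i$'s utility, violating Definition \ref{defn_ne}. By \eqref{eqn_penalty}, a positive penalty can only arise in one of two ways: either (A) there is an edge $e \in \mathcal{R}_i$ with $|\mathcal{S}_e| = 1$ on which $i$ reports $\tilde{\theta}_i^{e*} > \underline{\theta}^e$ (incurring $\gamma$), or (B) there is an edge $e \in \mathcal{R}_i$ with $|\mathcal{S}_e| \geq 2$ on which $i$ reports $\tilde{\theta}_i^{e*} = \underline{\theta}^e$ (incurring $\delta$). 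In each case I would construct a deviation that leaves $\tilde{\theta}_i^{e'*}$ and $\tau_i^{e'*}$ unchanged on every other edge $e' \neq e$, so that only the single edge $e$ is affected, and then track the three contributions to $u_i$: the satisfaction term $v_i$, the per-edge payment $t_i^e$ from \eqref{eqn_tax}, and the penalty $\phi_i$.

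For Case (A), I would have $i$ report $\underline{\theta}^e$ on the monopoly edge. Because $|\mathcal{S}_e| = 1$, the average competing price $\tau_{-i}^e$ is an empty average and equals $0$, so the first and third terms of \eqref{eqn_tax} vanish and the remaining term $(\tau_i^e - \nu_e)^2$ does not depend on $\tilde{\theta}_i^e$; hence $t_i^e$ is unchanged by the deviation. Since $v_i$ is strictly decreasing (Assumption \ref{assumption_satisfaction_function}) and $\underline{\theta}^e < \tilde{\theta}_i^{e*}$, the satisfaction term weakly increases while the penalty drops from $\gamma$ to $0$ (fourth branch of \eqref{eqn_penalty}). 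The reduced report meets the lower bound \eqref{constraint_nonnegativity} with equality and only slackens \eqref{constraint_capacity}, so the allocation equals the report and $u_i$ strictly increases, contradicting the NE property.

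For Case (B), I would have $i$ report $\underline{\theta}^e + \varepsilon$ for a small $\varepsilon > 0$, moving from the third to the second branch of \eqref{eqn_penalty} and thereby eliminating the fixed penalty $\delta$. Here I would invoke Lemma \ref{lemma_lagrange_prices}, which gives $\tau_i^{e*} = \tau_{-i}^{e*} = \nu_e^*$ at the NE; this makes the factor $(\tau_i^e - \tau_{-i}^e)$ in the third term of \eqref{eqn_tax} vanish, so only the first term of $t_i^e$, which is linear in $\tilde{\theta}_i^e$, contributes a change of order $O(\varepsilon)$, while the loss in $v_i$ is likewise $O(\varepsilon)$ by continuity. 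Thus the net change in utility is $-O(\varepsilon) + \delta$, which is strictly positive once $\varepsilon$ is chosen small relative to the fixed $\delta$, again contradicting the NE property. Combining the two cases shows that no positive penalty can persist at a NE, i.e., $\phi_i(\tilde{\theta}_i^*) = 0$ for every $i$.

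The step I expect to be the main obstacle is the feasibility bookkeeping in Case (B): raising $\tilde{\theta}_i^e$ could violate the capacity constraint \eqref{constraint_capacity} if it is tight at the NE, in which case the outcome function would project the allocation back onto the boundary (Theorem \ref{thm_feasibility}) and $v_i$ would be evaluated at the projected point rather than at $\underline{\theta}^e + \varepsilon$. I would need to argue either that $\phi_i$ and the payment in \eqref{eqn_tax} are functions of the \emph{reported} message (so the penalty still drops to $0$) while the projection perturbs $v_i$ by only $O(\varepsilon)$, or, alternatively, that a tight capacity on a shared edge where $i$ sits exactly at $\underline{\theta}^e$ is incompatible with the equal-price conclusion of Lemma \ref{lemma_lagrange_prices}. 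Making this continuity-under-projection argument precise is the delicate part, whereas the ``very high penalty'' design of $\gamma$ and $\delta$ is precisely what guarantees the fixed gain dominates the vanishing cost.
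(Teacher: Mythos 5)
Your proof is correct and follows essentially the same route as the paper: assume a positive penalty at a NE and exhibit a profitable unilateral deviation on the offending edge, contradicting Definition \ref{defn_ne}. You are in fact more thorough than the paper, which only works out the $|\mathcal{S}_e| = 1$ case and dismisses the $\tilde{\theta}_i^{e*} = \underline{\theta}^e$, $|\mathcal{S}_e| \geq 2$ case as ``similar'': your $\varepsilon$-deviation balancing a $-O(\varepsilon)$ cost against the fixed gain $\delta$ is exactly what that second case actually requires, and the capacity-projection subtlety you flag (the allocation being pushed back onto the boundary of the feasible set, per Theorem \ref{thm_feasibility}) is a genuine issue the paper's proof does not address.
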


\begin{proof}
    We prove this by contradiction. Suppose there exists a NE message $\mu ^ * = (m_i ^ * = (\tilde{\theta}_i ^ *, \tau_i ^ *))_{i \in \mathcal{I}}$ such that $\phi_i(\tilde{\theta}_i ^ *) \neq 0$ for traveler $i \in \mathcal{I}$. By \eqref{eqn_penalty}, we only have two cases to consider: let ${\tilde{\theta}_i ^ e}{} ^ * > \underline{\theta} ^ e$ with $|\mathcal{S}_e| = 1$ (the proof for the other case is similar). Suppose traveler $i$ deviates from the NE with message $m_i = ((\tilde{\theta}_i ^ e = \underline{\theta} ^ e : e \in \mathcal{R}_i), \tau_i ^ *)$. By Definition \ref{defn_ne}, we have
        \begin{equation}\label{eqn_penalty_1}
            u_i(g(m_i, m_{- i} ^ *)) \leq u_i(g(m ^ *)).
        \end{equation}
    Substitute \eqref{eqn_utility}, \eqref{eqn_tax}, and \eqref{eqn_penalty} into \eqref{eqn_penalty_1} and then Lemma \ref{lemma_lagrange_prices} gives
        \begin{equation}\label{eqn_penalty_2}
            [v_i(\tilde{\theta}_i) - v_i({\tilde{\theta}_i}{} ^ *)] - \sum_{e \in \mathcal{S}_e} \alpha_i \cdot \nu_e ^ * \cdot (\tilde{\theta}_i ^ e - {\tilde{\theta}_i ^ e}{} ^ *) + \phi_i(\tilde{\theta}_i ^ *) \leq 0,
        \end{equation}
    where by Assumption \ref{assumption_satisfaction_function}, the first difference term of \eqref{eqn_penalty_2} is negative; likewise the difference of $(\tilde{\theta}_i ^ e - {\tilde{\theta}_i ^ e}{} ^ *)$ is positive. Thus, it follows that, since $\phi_i(\tilde{\theta}_i ^ *) \gg 0$, traveler $i$ rightfully deviates from the NE $\mu ^ * = (m_i ^ * = (\tilde{\theta}_i ^ *, \tau_i ^ *))_{i \in \mathcal{I}}$ such that $\phi_i(\tilde{\theta}_i ^ *) \neq 0$ as \eqref{eqn_penalty_2} cannot be true (by construction of \eqref{eqn_penalty}). Since the case of $\tilde{\theta}_i ^ e < \underline{\theta} ^ e$, where $\phi_i(\tilde{\theta}_i) = + \infty$ is straightforward to show, and the proof is complete.
\end{proof}

\begin{theorem}[Budget Balance]
    Let the message profile $\mu ^ *$ be a NE of the induced game. The proposed mechanism at $\mu ^ *$ does not require any external or internal monetary payments, i.e., $\sum_{i \in \mathcal{I}} t_i(\mu ^ *) = 0$ for all $\mu ^ *$.
\end{theorem}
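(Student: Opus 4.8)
The plan is to evaluate the aggregate payment $\sum_{i \in \mathcal{I}} t_i(\mu^*)$ by first collapsing each per-edge payment $t_i^e(\mu^*)$ to its first term using the two preceding lemmas, then summing over all travelers and exchanging the order of summation so that the fair-share terms telescope, and finally invoking complementary slackness to kill each edge's contribution.

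First I would apply Lemma \ref{lemma_lagrange_prices}, which gives ${\tau_i^e}^* = \nu_e^* = {\tau_{-i}^e}^*$ at the NE $\mu^*$. Substituting these equalities into the payment \eqref{eqn_tax} annihilates the second term, since $({\tau_i^e}^* - \nu_e^*)^2 = 0$, and the third term, since the factor $({\tau_i^e}^* - {\tau_{-i}^e}^*) = 0$. Combining this with the preceding lemma that $\phi_i(\tilde{\theta}_i^*) = 0$ at any NE, each traveler's total payment reduces to
\begin{equation*}
    t_i(\mu^*) = \sum_{e \in \mathcal{R}_i} \nu_e^* \cdot \left(\alpha_i \cdot {\tilde{\theta}_i^e}^* - \frac{c_e}{|\mathcal{S}_e|}\right).
\end{equation*}

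Next I would sum over all $i \in \mathcal{I}$ and exchange the order of summation, using that summing over each traveler's route $\mathcal{R}_i$ and then over travelers is the same as summing over each edge $e \in \mathcal{E}$ and then over the travelers $\mathcal{S}_e$ that use it, i.e. $\sum_{i \in \mathcal{I}} \sum_{e \in \mathcal{R}_i} = \sum_{e \in \mathcal{E}} \sum_{i \in \mathcal{S}_e}$. The inner sum then telescopes, since the $|\mathcal{S}_e|$ copies of the fair share $c_e / |\mathcal{S}_e|$ add up to $c_e$, yielding
\begin{equation*}
    \sum_{i \in \mathcal{I}} t_i(\mu^*) = \sum_{e \in \mathcal{E}} \nu_e^* \cdot \left(\sum_{i \in \mathcal{S}_e} \alpha_i \cdot {\tilde{\theta}_i^e}^* - c_e\right).
\end{equation*}

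Finally I would invoke the KKT complementary slackness condition \eqref{eqn_kkt_3}, which asserts exactly that $\nu_e^* \cdot \left(\sum_{i \in \mathcal{S}_e} \alpha_i \cdot {\theta_i^e}^* - c_e\right) = 0$ for every $e \in \mathcal{E}$, so that each summand vanishes and the total is zero. The hard part will be the justification of this last step: \eqref{eqn_kkt_3} is stated for the KKT-optimal allocation ${\theta_i^e}^*$, so I must first certify that the NE travel times ${\tilde{\theta}_i^e}^*$ coincide with that optimal allocation. This follows by observing that the NE allocation is feasible by Theorem \ref{thm_feasibility}, and that Lemma \ref{lemma_lagrange_prices} already identifies the reported prices with the Lagrange multipliers $\nu_e^*$; together with the stationarity induced by each traveler's utility-maximizing choice of $\tilde{\theta}_i^e$, this matches the NE to the unique KKT point of Problem \ref{problem_centralized}, after which complementary slackness applies verbatim and the result follows.
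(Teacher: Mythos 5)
Your proposal is correct and follows essentially the same route as the paper's own proof: use Lemma \ref{lemma_lagrange_prices} to annihilate the quadratic and third terms of \eqref{eqn_tax}, exchange the order of summation so the fair shares $c_e/|\mathcal{S}_e|$ telescope to $c_e$, and conclude by the complementary slackness condition \eqref{eqn_kkt_3}. If anything, you are more careful than the paper, which silently drops the $\phi_i$ term and does not flag that identifying the NE allocation with the KKT point of Problem \ref{problem_centralized} is needed before \eqref{eqn_kkt_3} can be applied.
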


\begin{proof}
    Summing \eqref{eqn_tax} over all travelers yields $\sum_{i \in \mathcal{I}} t_i(\mu ^ *) = \sum_{i \in \mathcal{I}} \left[\sum_{e \in \mathcal{R}_i} t_i ^ e (\mu ^ *)\right] = \sum_{e \in \mathcal{H}} \sum_{i \in \mathcal{S}_e} t_i ^ e (\mu ^ *)$,
    where $\mathcal{H}$ is the set of competitive edges in the network (i.e., any edge utilized by more than two travelers). Hence,
        \begin{multline}
            \sum_{e \in \mathcal{H}} \sum_{i \in \mathcal{S}_e} {\tau_{- i} ^ e} ^ * \cdot \left(\alpha_i \cdot {\tilde{\theta}_i ^ e}{} ^ * - \frac{c_e}{|\mathcal{S}_e|}\right) + ({\tau_i ^ e} ^ * - \nu_e ^ *) ^ 2 \\
            + {\tau_{- i} ^ e} ^ * \cdot ({\tau_i ^ e} ^ * - {\tau_{- i} ^ e} ^ *) \cdot \left(c_e - \sum_{i \in \mathcal{S}_e} \alpha_i \cdot {\tilde{\theta}_i ^ e}{} ^ *\right).
        \end{multline}
    By Lemma \ref{lemma_lagrange_prices}, we have for all $e \in \mathcal{H}$, $\sum_{i \in \mathcal{I}} t_i(\mu ^ *) = \sum_{e \in \mathcal{H}} \nu_e ^ * \cdot \left(\sum_{i \in \mathcal{S}_e} \alpha_i \cdot {\tilde{\theta}_i ^ e}{} ^ * - c_e\right)$,
    which is equal to zero by the KKT conditions in Lemma \ref{lemma_kkt}.
\end{proof}

\begin{theorem}[Individually Rational]
    The proposed mechanism is individually rational. In particular, each traveler prefers the outcome of any NE of the induced game to the outcome of no participation.
\end{theorem}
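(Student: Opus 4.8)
The plan is to show that, at any NE $\mu^*$, the realized utility $u_i(g(\mu^*)) = v_i(\tilde{\theta}_i^*) - t_i(\mu^*)$ is at least the reservation utility $u_i(0,0)=0$ stipulated in Definition \ref{defn_ir}. First I would collapse the equilibrium payment using the two preceding lemmas. By Lemma \ref{lemma_lagrange_prices}, at $\mu^*$ we have ${\tau_i^e}^* = {\tau_{-i}^e}^* = \nu_e^*$ for every $e \in \mathcal{R}_i$; substituting these into the per-edge payment \eqref{eqn_tax} annihilates the squared term $({\tau_i^e}^* - \nu_e^*)^2$ and the equal-price term ${\tau_{-i}^e}^* \cdot ({\tau_i^e}^* - {\tau_{-i}^e}^*)\cdot(\cdots)^2$, leaving only $t_i^e(\mu^*) = \nu_e^* \cdot (\alpha_i \tilde{\theta}_i^{e*} - c_e/|\mathcal{S}_e|)$. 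Since the previous lemma gives $\phi_i(\tilde{\theta}_i^*) = 0$, the total payment reduces to $t_i(\mu^*) = \sum_{e\in\mathcal{R}_i} \nu_e^*(\alpha_i \tilde{\theta}_i^{e*} - c_e/|\mathcal{S}_e|)$, so that $u_i(g(\mu^*)) = \sum_{e\in\mathcal{R}_i}\big[ v_i(\tilde{\theta}_i^{e*}) - \nu_e^* \alpha_i \tilde{\theta}_i^{e*} + \nu_e^* c_e/|\mathcal{S}_e|\big]$.

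Next I would prove this sum is non-negative using the optimality characterization. By Theorem \ref{thm_feasibility} the outcome $\theta$ is feasible, and Lemma \ref{lemma_kkt} lets me identify it with the KKT point of Problem \ref{problem_centralized}. Because $v_i$ is concave with $v_i(0)=0$ (Assumption \ref{assumption_satisfaction_function}), the tangent-line inequality yields $v_i(\tilde{\theta}_i^{e*}) \geq v_i'(\tilde{\theta}_i^{e*})\cdot\tilde{\theta}_i^{e*}$; substituting the stationarity relation \eqref{eqn_kkt_1}, namely $v_i'(\tilde{\theta}_i^{e*}) = \alpha_i \nu_e^* - {\lambda_i^e}^*$, gives $v_i(\tilde{\theta}_i^{e*}) - \nu_e^* \alpha_i \tilde{\theta}_i^{e*} \geq -{\lambda_i^e}^* \tilde{\theta}_i^{e*}$. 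Complementary slackness \eqref{eqn_kkt_2} then lets me replace ${\lambda_i^e}^*\tilde{\theta}_i^{e*}$ by ${\lambda_i^e}^*\underline{\theta}^e$, and the penalty-free structure of the NE (on competitive edges, where $|\mathcal{S}_e|\geq 2$, one has $\tilde{\theta}_i^{e*} > \underline{\theta}^e$ and hence ${\lambda_i^e}^* = 0$) together with $\nu_e^*, c_e > 0$ should drive the aggregate to be non-negative. Summing over $e\in\mathcal{R}_i$ delivers $u_i(g(\mu^*))\geq 0$.

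I would then give the deviation interpretation the statement itself emphasizes: since $\mu^*$ is a NE, $m_i^*$ maximizes traveler $i$'s utility against $m_{-i}^*$ (Definition \ref{defn_ne}), and because the no-participation outcome carrying utility $0$ is an available fallback for $i$, any equilibrium must deliver at least as much — making the preference of the NE outcome over no participation immediate once the computation above pins down the sign.

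The step I expect to be the main obstacle is precisely this sign bookkeeping. Unlike a VCG payment, the payment here is a quadratic-penalized bid, so after the substitution ${\tau_i^e}^*=\nu_e^*$ I must verify that the residual is \emph{exactly} the fair-share term $\nu_e^*(\alpha_i\tilde{\theta}_i^{e*}-c_e/|\mathcal{S}_e|)$ and nothing more. The delicate part is reconciling the fair-share credit $\nu_e^* c_e/|\mathcal{S}_e|$ against the travel disutility $v_i(\tilde{\theta}_i^{e*})$ edge by edge: both multiplier families $({\lambda_i^e}^*$ and $\nu_e^*)$ must be handled through complementary slackness, and particular care is needed on non-competitive edges pinned at $\underline{\theta}^e$ (where $\nu_e^*$ may vanish while ${\lambda_i^e}^*$ need not), since a naive term-by-term bound there looks negative and only the combined concavity-plus-KKT argument restores non-negativity.
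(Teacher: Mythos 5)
Your proposal takes a genuinely different route from the paper, and the step you yourself flag as ``the main obstacle'' is a real gap that your argument does not close. The paper's proof is a short deviation argument: fix the NE $\mu^*$ and consider the specific unilateral deviation $m_i = (\tilde{\theta}_i, \tau_i)$ with $\tilde{\theta}_i = 0$ and $\tau_i^e = \nu_e^*$ on every $e \in \mathcal{R}_i$. With this choice the penalty $(\tau_i^e - \nu_e)^2$ in \eqref{eqn_tax} vanishes, the third term vanishes because $\tau_i^e = {\tau_{-i}^e}{}^*$ by Lemma \ref{lemma_lagrange_prices}, and the first term becomes ${\tau_{-i}^e}{}^*\left(0 - c_e/|\mathcal{S}_e|\right)$, so the deviation utility equals $v_i(0) + \sum_{e\in\mathcal{R}_i}\nu_e^*\, c_e/|\mathcal{S}_e| \geq 0$ by dual feasibility \eqref{eqn_kkt_4} and $c_e>0$. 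Definition \ref{defn_ne} then gives $u_i(g(\mu^*)) \geq 0$ without ever evaluating the equilibrium utility itself. Your closing paragraph gestures at this idea (``the no-participation outcome is an available fallback''), but you never instantiate the fallback as a concrete message and compute its utility, which is the entire content of the paper's proof.

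The direct evaluation you pursue instead cannot be completed from the paper's assumptions. After your (correct) reduction $t_i(\mu^*) = \sum_{e\in\mathcal{R}_i} \nu_e^*\left(\alpha_i {\tilde{\theta}_i^e}{}^* - c_e/|\mathcal{S}_e|\right)$, you must certify
\begin{equation*}
\sum_{e\in\mathcal{R}_i} \left[ v_i({\tilde{\theta}_i^e}{}^*) - \alpha_i \nu_e^* {\tilde{\theta}_i^e}{}^* + \nu_e^*\, c_e/|\mathcal{S}_e| \right] \geq 0.
\end{equation*}
Your tangent-line plus complementary-slackness chain yields the per-edge lower bound $-{\lambda_i^e}^*\underline{\theta}^e + \nu_e^*\, c_e/|\mathcal{S}_e|$, and on an edge pinned at the floor this is not sign-definite: stationarity there gives ${\lambda_i^e}^* = \alpha_i\nu_e^* - v_i'(\underline{\theta}^e) > \alpha_i\nu_e^*$ because $v_i$ is strictly decreasing, so non-negativity would require roughly $c_e/|\mathcal{S}_e| \geq \alpha_i\underline{\theta}^e$, a fair-share condition that is nowhere assumed. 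Worse, if no capacity constraint on $\mathcal{R}_i$ binds (so $\nu_e^* = 0$ throughout), your expression collapses to $v_i(\tilde{\theta}_i^*)$, which is strictly negative for any positive travel time since $v_i$ is strictly decreasing with $v_i(0)=0$; no KKT bookkeeping can restore non-negativity of that quantity. This is precisely why the paper avoids evaluating the equilibrium utility and argues instead through the explicit zero-report deviation; you should do the same.
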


\begin{proof}
    Let the message profile $\mu ^ *$ be an arbitrary NE of the induced game. We need to show that $u_i(\mu ^ *) \geq u_i(0) = 0$ for each traveler $i$ (see Definition \ref{defn_ir}). Consider the message $m_i = (\tilde{\theta}_i, \tau_i)$ with $\tilde{\theta}_i = 0$ and $\tau_i = (\tau_i ^ e = \nu_e : e \in \mathcal{R}_i)$. That is, traveler $i$ deviates with $m_i$ while the other travelers adhere to the NE $\mu ^ *$. By Definition \ref{defn_ne}, we have the following:
        \begin{align}
            u_i(g(\mu ^ *)) & \geq u_i(g(m_i, m_{- i} ^ *)) \notag \\
            & = v_i(0) - \sum_{e \in \mathcal{R}_i} {\tau_{- i} ^ e}{} ^ * \cdot \left(0 - \frac{c_e}{|\mathcal{S}_e|}\right) \notag \\
            & = \sum_{e \in \mathcal{R}_i} \nu_e ^ * \cdot \left(\frac{c_e}{|\mathcal{S}_e|}\right) \geq 0. \label{eqn_ir}
        \end{align}
    Thus, from \eqref{eqn_ir}, the result follows.
\end{proof}

In our next result, we show that our mechanism is strongly implementable at NE. Strong implementation ensures that the efficient allocation of travel time to the travelers is implemented by all equilibria of the induced game \cite{mathevet}.

\begin{theorem}[Strong Implementation]\label{thm_strong_implementation}
    At an arbitrary NE $\mu ^ *$ of the induced game, the allocation travel time $(\tilde{\theta}_i ^ *)_{i \in \mathcal{I}}$ is equal to the optimal solution $(\theta_i ^ *)_{i \in \mathcal{I}}$ of Problem \ref{problem_centralized} for each $i \in \mathcal{I}$.
\end{theorem}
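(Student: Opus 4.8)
The plan is to show that any Nash equilibrium (NE) allocation $(\tilde{\theta}_i^{e*})$ satisfies the entire KKT system of Lemma~\ref{lemma_kkt} and then to invoke uniqueness of the optimizer to conclude $\tilde{\theta}^* = \theta^*$. The starting observation is that, by Lemma~\ref{lemma_lagrange_prices}, at any NE every traveler reports the common price $\tau_i^{e*} = \tau_{-i}^{e*} = \nu_e^*$, so in \eqref{eqn_tax} the penalty term $(\tau_i^e - \nu_e)^2$ and the alignment term $\tau_{-i}^e(\tau_i^e - \tau_{-i}^e)(\cdots)^2$ both vanish. Consequently, holding the other travelers' messages and traveler $i$'s own price fixed at their NE values, the only part of $t_i(\mu)$ that varies with $\tilde{\theta}_i$ is the linear capacity-price term $\sum_{e \in \mathcal{R}_i} \nu_e^* \alpha_i \tilde{\theta}_i^e$ (the fair-share offsets $\nu_e^* c_e / |\mathcal{S}_e|$ being constants), and by the lemma establishing $\phi_i(\tilde{\theta}_i^*) = 0$ at every NE the discontinuous incentive $\phi_i$ is inactive near the NE report.

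First I would write down traveler $i$'s effective best-response program at the NE: maximize $\sum_{e \in \mathcal{R}_i}\big(v_i(\tilde{\theta}_i^e) - \nu_e^* \alpha_i \tilde{\theta}_i^e\big)$ subject to $\tilde{\theta}_i^e \ge \underline{\theta}^e$. This is a strictly concave program in $\tilde{\theta}_i$, so its own KKT conditions are necessary and sufficient: introducing a multiplier $\lambda_i^{e*} \ge 0$ for each constraint $\tilde{\theta}_i^e \ge \underline{\theta}^e$, stationarity reads $\partial v_i(\tilde{\theta}_i^{e*})/\partial \theta_i^e - \alpha_i \nu_e^* + \lambda_i^{e*} = 0$ with complementary slackness $\lambda_i^{e*}(\tilde{\theta}_i^{e*} - \underline{\theta}^e) = 0$. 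These coincide with \eqref{eqn_kkt_1}, \eqref{eqn_kkt_2} and \eqref{eqn_kkt_4} of the centralized problem, the crucial point being that the price $\nu_e^*$ appearing in the mechanism is, by construction and by Lemma~\ref{lemma_lagrange_prices}, precisely the Lagrange multiplier of the capacity constraint \eqref{constraint_capacity}.

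Next I would recover the remaining condition \eqref{eqn_kkt_3}, the capacity complementary slackness, from the price dimension of the NE. Holding $\tilde{\theta}$ fixed and minimizing the $\tau_i^e$-dependent part of \eqref{eqn_tax}, namely $(\tau_i^e - \nu_e^*)^2 + \tau_{-i}^{e*}(\tau_i^e - \tau_{-i}^{e*})\,\xi$ with $\xi = (c_e - \sum_{i \in \mathcal{S}_e} \alpha_i \tilde{\theta}_i^{e*})^2$, the first-order condition gives $\tau_i^e = \nu_e^* - \tfrac{1}{2}\tau_{-i}^{e*}\xi$; equating this best response with the NE value $\tau_i^{e*} = \tau_{-i}^{e*} = \nu_e^*$ forces $\nu_e^*\,\xi = 0$, i.e. $\nu_e^*\big(\sum_{i \in \mathcal{S}_e} \alpha_i \tilde{\theta}_i^{e*} - c_e\big) = 0$, which is \eqref{eqn_kkt_3}. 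Together with the feasibility of $\tilde{\theta}^*$ guaranteed by Theorem~\ref{thm_feasibility}, the pair $(\tilde{\theta}^*, (\lambda^*, \nu^*))$ then satisfies every condition in Lemma~\ref{lemma_kkt}; since those conditions are necessary and sufficient and the optimizer of Problem~\ref{problem_centralized} is unique (Lemma~1), I conclude $\tilde{\theta}_i^{e*} = \theta_i^{e*}$ for all $i$ and $e$, hence $(\tilde{\theta}_i^*)_{i \in \mathcal{I}} = (\theta_i^*)_{i \in \mathcal{I}}$.

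The main obstacle I anticipate is making the individual best-response argument rigorous at the boundary, i.e. ensuring that the outcome function's projection of infeasible demands and the discontinuous incentive $\phi_i$ do not create spurious best responses: I must argue that at a NE the report $\tilde{\theta}^*$ is itself feasible (so the allocation equals the report rather than its projection), that $\phi_i$ stays zero under the marginal deviations used to derive stationarity, and that the per-traveler multipliers $\lambda_i^{e*}$ assemble into a single dual vector consistent with the centralized KKT system. The price-dimension step is delicate as well, since it tacitly uses differentiability of the third term of \eqref{eqn_tax} in $\tau_i^e$ and an interior best response; I would discharge this by the same symmetric-equilibrium reasoning already used in Lemma~\ref{lemma_lagrange_prices}.
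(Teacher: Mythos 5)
Your proposal is correct and follows essentially the same route as the paper: use Lemma \ref{lemma_lagrange_prices} to fix all prices at $\nu_e^*$ so the tax reduces to the linear term $\sum_{e \in \mathcal{R}_i} \nu_e^*\bigl(\alpha_i\tilde{\theta}_i^e - c_e/|\mathcal{S}_e|\bigr)$, read off the individual best-response KKT system, match it to the centralized KKT system of Lemma \ref{lemma_kkt}, and invoke uniqueness of the optimizer. Your version is in fact slightly more careful than the paper's --- you retain the per-edge multipliers $\lambda_i^{e*}$ rather than setting them to zero, derive \eqref{eqn_kkt_3} explicitly from the first-order condition in $\tau_i^e$, and flag the boundary/projection issues the paper passes over --- but these are refinements of the same argument, not a different one.
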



\begin{proof}
    Suppose $\mu ^ *$ is a NE of the induced game. Then, by Lemma \ref{lemma_lagrange_prices}, it follows that ${\tau_i ^ e} ^ * = {\tau_{- i} ^ e} ^ * = \nu_e ^ *$ for each $e \in \mathcal{R}_i$. Next, consider some traveler $i$ that participates in the mechanism and has preferred travel time $\theta_i$.  The utility of traveler $i$ for such an allocation is given by
        \begin{equation}\label{eqn_implementation_1}
            u_i(g(m_i, m_{- i} ^ *)) = v_i(\theta_i) - t_i(m_i, m_{- i} ^ *),
        \end{equation}
    where $t_i(m_i, m_{- i} ^ *) = \sum_{e \in \mathcal{R}_i} \nu_e ^ * \left(\alpha_i \cdot \theta_i ^ e - \frac{c_e}{|\mathcal{S}_e|}\right)$. By Definition \ref{defn_ne}, it follows that at NE no traveler should have an incentive to deviate. Hence, the maximization of traveler $i$'s utility \eqref{eqn_implementation_1} must be attained at the NE travel time allocation, i.e., $\theta_i ^ * = \tilde{\theta}_i ^ *$. The Nash-maximization problem is
        \begin{equation}\label{eqn_implementation_2}
            {\tilde{\theta}_i ^ e}{} ^ * = \arg \max_{\theta_i ^ e} \left[\sum_{e \in \mathcal{R}_i} v_i(\theta_i ^ e) - \sum_{e \in \mathcal{R}_i} \nu_e ^ * \left(\alpha_i \cdot \theta_i ^ e - \frac{c_e}{|\mathcal{S}_e|}\right)\right],
        \end{equation}
    subject to the exact same constraints as in Problem \ref{problem_centralized}. Now, it is easy to derive the KKT conditions that will give the optimal ``Nash solution." By Lemma \ref{lemma_kkt}, the KKT conditions are necessary and sufficient to guarantee the optimality of any travel time allocation $(\theta_i)_{i \in \mathcal{I}}$ that satisfies them. Thus, it is sufficient to show that there exist appropriate Lagrange multipliers ${\lambda_i ^ e}{} ^ *$ and $\nu_e ^ *$ such that \eqref{eqn_kkt_1} - \eqref{eqn_kkt_3} are satisfied. By setting $\lambda_i ^ e = 0$ and $\nu_e = \tau_i ^ e$ for all $e \in \mathcal{E}$, by differentiation of \eqref{eqn_tax} with respect to $\theta_i ^ e$ and $\tau_i ^ e$, we get
        \begin{gather}\label{eqn_implementation_3}
            \frac{\partial v_i({\tilde{\theta}_i ^ e}{} ^ *)}{\partial \tilde{\theta}_i ^ e} = \sum_{e \in \mathcal{R}_i} \alpha_i \cdot \nu_e ^ *, \quad \forall i \in \mathcal{I}, \\
            \nu_e ^ * \cdot \left(\sum_{i \in \mathcal{S}_e} \alpha_i \cdot {\tilde{\theta}_i ^ e}{} ^ * - c_e\right) = 0, \quad \forall e \in \mathcal{E}. \label{eqn_implementation_4}
        \end{gather}
    It is straightforward to see that \eqref{eqn_implementation_3} and \eqref{eqn_implementation_4} are identical to \eqref{eqn_kkt_1} and \eqref{eqn_kkt_3}, respectively. Condition \eqref{eqn_kkt_2} in both problems holds trivially. Consequently, the solution $\tilde{\theta} ^ * = (\tilde{\theta}_1 ^ *, \dots, \tilde{\theta}_n ^ *)$ of \eqref{eqn_implementation_3} and \eqref{eqn_implementation_4} along with the specification of the payment functions \eqref{eqn_tax} are equivalent to the optimal unique solution of Problem \ref{problem_centralized}. Thus, at any NE $\mu ^ *$, we get an identical allocation $g(\mu ^ *) = (\tilde{\theta}_1 ^ *, \dots, \tilde{\theta}_n ^ *, t_1 ^ *, \dots, t_n ^ *)$ that is equal to the optimal solution of Problem \ref{problem_centralized}, and the proof is complete. 
\end{proof}

\begin{theorem}[Existence]
    Let $\theta ^ *$ be the optimal solution of Problem \ref{problem_centralized} and $\nu_e ^ *$ be the corresponding Lagrange multipliers of the KKT conditions. If for each $i \in \mathcal{I}$, $m_i ^ * = (\tilde{\theta}_i ^ * = \theta_i ^ *, \tau_i ^ *)$, where $\tau_i ^ * = ({\tau_i ^ e}{} ^ * = \nu_e ^ * : \forall e \in \mathcal{R}_i)$ and $\phi_i(\tilde{\theta}_i ^ *) = 0$ for all $i \in \mathcal{I}$. Then the message $\mu ^ * = (m_i ^ *)_{i \in \mathcal{I}}$ is a NE of the induced game.
\end{theorem}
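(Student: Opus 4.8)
The plan is to check the Nash condition of Definition~\ref{defn_ne} directly. I fix an arbitrary traveler $i\in\mathcal{I}$, freeze the other messages at $m_{-i}^*$, and show $m_i^*=(\tilde\theta_i^*=\theta_i^*,\tau_i^*)$ maximizes $u_i(g(m_i,m_{-i}^*))$ over all admissible $m_i=(\tilde\theta_i,\tau_i)$. Because every other traveler $j$ bids $\tau_j^e=\nu_e^*$ on each shared edge, Definition~\ref{defn_others_taxes} gives $\tau_{-i}^e=\nu_e^*$ for every $e\in\mathcal{R}_i$. Substituting this into \eqref{eqn_tax} (and discarding at once any deviation that triggers $\phi_i>0$, since such a penalty can only lower the utility) reduces traveler $i$'s payoff to
\[
u_i(g(m_i,m_{-i}^*))=\sum_{e\in\mathcal{R}_i} v_i(\tilde\theta_i^e)-\sum_{e\in\mathcal{R}_i}\Big[\nu_e^*\big(\alpha_i\tilde\theta_i^e-\tfrac{c_e}{|\mathcal{S}_e|}\big)+(\tau_i^e-\nu_e^*)^2+\nu_e^*(\tau_i^e-\nu_e^*)B_e\Big],
\]
where $B_e=\big(c_e-\sum_{k\in\mathcal{S}_e}\alpha_k\tilde\theta_k^e\big)^2\ge 0$ and the whole expression is a function of $i$'s own report only.

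Next I would optimize over the prices. For each $e$ the $\tau_i^e$-dependent part is the strictly convex parabola $(\tau_i^e-\nu_e^*)^2+\nu_e^*(\tau_i^e-\nu_e^*)B_e$, exactly the object minimized in Lemma~\ref{lemma_lagrange_prices}. Evaluated at the candidate report $\tilde\theta_i^e=\theta_i^{e*}$, complementary slackness \eqref{eqn_kkt_3} forces $B_e=0$ whenever $\nu_e^*>0$, while the coupling term vanishes identically when $\nu_e^*=0$; in either case the payment-minimizing price is $\tau_i^e=\nu_e^*$ and both price terms drop out. With the prices pinned at $\nu_e^*$, the best-response objective collapses to $\sum_{e\in\mathcal{R}_i}v_i(\tilde\theta_i^e)-\sum_{e\in\mathcal{R}_i}\nu_e^*\big(\alpha_i\tilde\theta_i^e-c_e/|\mathcal{S}_e|\big)$ subject to the constraints of Problem~\ref{problem_centralized}, which is precisely the Nash-maximization problem \eqref{eqn_implementation_2} from the proof of Theorem~\ref{thm_strong_implementation}. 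This objective is strictly concave, so by Lemma~\ref{lemma_kkt} its KKT conditions are necessary and sufficient for the unique maximizer; since $\theta^*$ together with $\nu_e^*$ and $\lambda_i^e=0$ satisfies \eqref{eqn_kkt_1}--\eqref{eqn_kkt_4} by construction, that maximizer is $\tilde\theta_i^e=\theta_i^{e*}$. Hence $m_i^*$ is a best response for every $i$, and $\mu^*$ is a NE.

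The step I expect to be delicate is the \emph{joint} deviation in $(\tilde\theta_i,\tau_i)$: away from the candidate report $B_e$ can be strictly positive, in which case the payment-minimizing price slides to $\nu_e^*-\nu_e^*B_e/2$ and the coupling term turns into a subsidy worth up to $\nu_e^{*2}B_e^2/4$, which breaks the clean decoupling used above. I would close this gap by arguing that enlarging $B_e$ forces $\tilde\theta_i^e$ either to violate \eqref{constraint_capacity} — so the outcome function of Theorem~\ref{thm_feasibility} projects the allocation to the boundary and the apparent gain disappears — or to incur a penalty $\phi_i>0$; and within the feasible, penalty-free region the strict concavity of $v_i$ dominates the bounded quadratic price gain. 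This confirms that no profitable deviation exists and that the decoupled computation is legitimate.
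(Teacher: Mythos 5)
Your argument follows essentially the same route as the paper's: fix $m_{-i}^*$, note that $\tau_{-i}^e=\nu_e^*$, reduce traveler $i$'s best response to the Nash-maximization problem \eqref{eqn_implementation_2}, and invoke the KKT equivalence of Lemma \ref{lemma_kkt} to conclude that $\tilde\theta_i^e=\theta_i^{e*}$ and $\tau_i^e=\nu_e^*$ are optimal. The one substantive difference is that you explicitly flag the joint deviation in $(\tilde\theta_i,\tau_i)$, where the coupling term $\tau_{-i}^e(\tau_i^e-\tau_{-i}^e)B_e$ can be driven negative and act as a subsidy of up to $\nu_e^{*2}B_e^2/4$. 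You are right that this is the delicate step; it is worth noting that the paper's own proof does not address it either --- the first inequality in \eqref{eqn_existence} silently discards the two price-dependent terms of \eqref{eqn_tax}, which is only valid if their sum is non-negative, and that is exactly what a joint deviation can violate. Your proposed patch, however, is a sketch rather than a proof: the claim that ``the strict concavity of $v_i$ dominates the bounded quadratic price gain'' is not justified and does not obviously hold (for a $v_i$ with arbitrarily small curvature the concavity loss from a deviation is second order in the deviation size, while the subsidy scales like $B_e^2$, i.e.\ like the fourth power of the capacity slack created by the report, and the reported $\tilde\theta_i^e$ entering $B_e$ is not subject to the projection of Theorem \ref{thm_feasibility}, which only alters the allocated $\theta_i$). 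So your write-up is at least as complete as the published argument and more candid about where the difficulty lies, but the gap you identify remains open in both.
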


\begin{proof}

    We show that the message profile $\mu ^ *= (m_i ^ *)_{i \in \mathcal{I}}$ where $m_i ^ * = (\tilde{\theta}_i = \theta_i ^ *, \tau_i ^ *)$ is a NE. By Lemma \ref{lemma_kkt}, it follows that $\theta ^ *$ along with the appropriate Lagrange multipliers satisfies the KKT conditions of Problem \ref{problem_centralized} and is the only feasible allocation. For any traveler $i$, the utility at message $\mu ^ *$ is $u_i(g(\mu ^ *)) = v_i(\tilde{\theta}_i ^ *) - \sum_{e \in \mathcal{R}_i} \nu_e ^ * \cdot \left(\alpha_i \cdot {\tilde{\theta}_i ^ e}{} ^ * - \frac{c_e}{|\mathcal{S}_e|}\right)$.
    Now, suppose traveler $i$ deviates from $\mu ^ *$ by changing their message while all the other travelers adhere to the message $\mu ^ *$ (though we would still have ${\tau_{- i} ^ e}{} ^ * = \nu_e ^ *$). We have
        \begin{gather}
            u_i(g(m_i, m_{- i} ^ *)) \leq v_i(\tilde{\theta}_i ') - \sum_{e \in \mathcal{R}_i} \nu_e ^ * \cdot \left(\alpha_i \cdot \tilde{\theta}_i ^ e{} ' - \frac{c_e}{|\mathcal{S}_e|}\right) \notag \\
            \leq \max_{\tilde{\theta}_i '} \left[v_i(\theta_i ') - \sum_{e \in \mathcal{R}_i} \nu_e ^ * \cdot \left(\alpha_i \cdot \tilde{\theta}_i ^ e{} ' - \frac{c_e}{|\mathcal{S}_e|}\right)\right]. \label{eqn_existence}
        \end{gather}
    The maximization problem \eqref{eqn_existence} is equivalent to \eqref{eqn_implementation_2}. As the message $\mu ^ *$ clearly satisfies the KKT conditions of \eqref{eqn_implementation_2}, we have $\theta_i = \tilde{\theta}_i = \tilde{\theta}_i '$, which in turn implies:
        \begin{equation}\label{eqn_existence_last}
            u_i(g(m_i, m_{- i} ^ *)) \leq v_i(\tilde{\theta}_i ^ *) - \sum_{e \in \mathcal{R}_i} \nu_e ^ * \cdot \left(\alpha_i \cdot {\tilde{\theta}_i ^ e} - \frac{c_e}{|\mathcal{S}_e|}\right),
        \end{equation}
    where the right hand side of \eqref{eqn_existence_last} is equal to $u_i(g(\mu ^ *))$, for all $m_i ^ *$ and all $i \in \mathcal{I}$. Therefore, message $\mu ^ *$ is a NE.
\end{proof}

\section{Concluding Remarks}\label{section_conclusion}

In this paper, we formulated the routing of strategic travelers that use CAVs in a transportation network as a social resource allocation mechanism design problem. Considering a Nash-implementation approach, we showed that our proposed informationally decentralized mechanism efficiently allocates travel time to all travelers that seek to commute in the network. Our mechanism induces a game which at least one equilibrium prevents congestion (a significant rebound effect), while also attaining the properties of individually rationality, budget balanced, strongly implementability. Ongoing work includes conducting a simulation-based analysis under different traffic scenarios to showcase the practical implications of our mechanism. Extending and enhancing the traveler-behavioral model, motivated by a social-mobility survey can be a worthwhile undertaking as a future research direction allowing the study of the relationship of emerging mobility and the intricacies of human decision-making.


\addtolength{\textheight}{-12cm}   


\bibliographystyle{IEEEtran}
\bibliography{references}

\end{document}